\theoremstyle{plain}
\newtheorem{theorem}{Theorem}
\newtheorem{proposition}{Proposition}
\newtheorem{lemma}{Lemma}
\theoremstyle{definition}
\begin{document}

\title{\vspace{-0.6cm}\textbf{Identifiability and improper solutions in the probabilistic partial least squares regression with unique variance}}

\author{Takashi Arai\thanks{\texttt{takashi-arai@sci.kj.yamagata-u.ac.jp}} \vspace{0.2cm} \\ 
Faculty of Science, Yamagata University, Yamagata 990-8560, Japan}


\date{\vspace{-0.5cm}}

\maketitle

\begin{abstract}
    This paper addresses theoretical issues associated with probabilistic partial least squares (PLS) regression.
    As in the case of factor analysis, the probabilistic PLS regression with unique variance suffers from the issues of improper solutions and lack of identifiability, both of which causes difficulties in interpreting latent variables and model parameters.
    Using the fact that the probabilistic PLS regression can be viewed as a special case of factor analysis, we apply a norm constraint prescription on the factor loading matrix in the probabilistic PLS regression, which was recently proposed in the context of factor analysis to avoid improper solutions.
    Then, we prove that the probabilistic PLS regression with this norm constraint is identifiable.
    We apply the probabilistic PLS regression to data on amino acid mutations in Human Immunodeficiency Virus (HIV) protease to demonstrate the validity of the norm constraint and to confirm the identifiability numerically.
    Utilizing the proposed constraint enables the visualization of latent variables via a biplot.
    We also investigate the sampling distribution of the maximum likelihood estimates (MLE) using synthetically generated data.
    We numerically observe that MLE is consistent and asymptotically normally distributed.
\end{abstract}

\section{Introduction}
Partial least squares (PLS) regression is a multivariate statistical method that constructs lower-dimensional composite variables, also regarded as latent variables, by taking linear combinations of explanatory variables, and uses these composite variables to predict objective variables.
PLS regression can handle ill-posed regression where the number of explanatory variables $p$ exceeds the number of observations $n$, as well as situations where strong multicollinearity among explanatory variables exists.
Therefore, PLS regression can be used as an alternative to regularized regression methods.
PLS regression is extensively used in chemometrics and is considered a standard method used for ill-posed multivariate linear regression.
PLS regression is also used in the life sciences and in multivariate statistical process control (MSPC) in industrial processes, where ill-posed problems and multicollinearity often arise.
Specifically, in the life sciences, PLS regression is applied to genomic data~\cite{Boulesteix2006}, while in industrial applications, it is used for process monitoring and soft sensing~\cite{Zheng2016}.

The classical PLS regression has many variations, e.g., PLS1, PLS2, SIMPLS, NIPALS, etc., depending on the objective functions and algorithms in estimating model parameters~\cite{Boulesteix2006}.
Due to the many algorithmic variants of PLS regression, it is difficult for practitioners of data analysis to understand the principles underlying PLS.
Although PLS regression has been applied in a variety of fields, it has not become as widely used as the more popular method, such as principal component analysis, possibly due to confusion stemming from its many algorithmic variants.
In light of this issue, it is desirable to formulate PLS within a probabilistic framework in order for the model to become more widely adopted and recognized as an established method.
In a probabilistic framework, the concept of likelihood can serve as a unified objective function for parameter estimation, thereby avoiding the confusion caused by the existence of algorithmic variants in classical PLS regression.
Furthermore, the probabilistic PLS model can naturally handle missing values in the dataset, which is very common in practice.

As a probabilistic extension of PLS regression, probabilistic PLS regression has been proposed~\cite{Gustafsson2001, Murphy2012}.
This model can handle noisy explanatory variables, which are commonly observed in experimental data, but the model assumes that the variance of the observational noise is the same across all features.
The probabilistic PLS model has been further extended to include unique variance~\cite{Zheng2016}, independent variances specific to each variable, which makes the model more practical.
However, the introduction of unique variance causes the problem of improper solutions, a problem that also arises in factor analysis~\cite{Gerbing1985, Gerbing1987}.
Specifically, maximum likelihood estimates of certain unique variances may become close to zero, resulting in extremely large absolute values of the estimated latent variables and making their interpretation difficult.
Furthermore, in order to appropriately interpret the latent variables in PLS regression, the identifiability problem of model parameters should be addressed.
In fact, neither classical PLS regression nor the probabilistic PLS regression are identifiable unless appropriate constraints are imposed.
The improper solutions and nonidentifiability in PLS regression do not necessarily affect the predictive performance of the model, and therefore may not be considered problematic in fields where the primary concern of the analysis is solely on predictive performance.
In fact, in the field of chemometrics, the primary objective is to improve the predictive accuracy for objective variables.
As a result, the interpretation of latent variables has not been emphasized, and the issues related to identifiability and improper solutions have received little attention~\cite{Zheng2016}.
However, in applications in the life sciences such as genetic analysis, where the interpretation of latent variables is essential, interpreting latent variables and model parameters is essential for understanding the underlying molecular mechanisms.
In addition, the latent scores obtained by PLS regression are often used as explanatory variables when applying other predictive models.
For example, in partial least squares discriminant analysis (PLSDA), which is an extension of PLS regression to discriminant analysis, the latent scores are used as explanatory variables for the discriminant analysis~\cite{Chung2010}.
Therefore, for the effective use of the PLS model, it is important to address issues related to identifiability and improper solutions.

In this paper, we address the issues related to improper solutions and identifiability in the probabilistic PLS regression.
The identifiability of probabilistic PLS regression has been addressed in the previous study~\cite{Bouhaddani2018}; however, the model does not incorporate unique variance.
In addition, severe restrictions on the number of latent dimensions and constraints on the structure of model parameters are imposed to ensure identifiability, which impairs the representational capability and predictive performance of the model.
The model considered in this paper is identical to the probabilistic PLS regression with unique variance proposed in the previous study~\cite{Zheng2016}.
To avoid improper solutions, we adopt a recently proposed norm constraint that enforces equal row norms of the factor loading matrix across all observed variables~\cite{Arai2022}.
We also establish the identifiability of the model under this norm constraint.

This paper is organized as follows.
In Sec.~\ref{sec:summary}, we summarize the probabilistic PLS model and propose methods for addressing the issues of improper solutions and identifiability.
In Sec.~\ref{sec:numerical_validation}, we validate the proposed method through numerical experiments.
We apply probabilistic PLS regression to real data on amino acid mutations in HIV protease and show a biplot.
We also investigate the sampling distribution of the maximum likelihood estimates of the model parameters.
Sec.~\ref{sec:conclusion} is devoted to conclusion.
A proof of model identifiability is presented in the appendices.

\section{Statement of the result} \label{sec:summary}
Let $\mathbf{x}$ and $\mathbf{y}$ be continuous explanatory and objective variables and are represented by $p_x$- and $p_y$-dimensional column vectors, respectively.
We introduce continuous latent variables $\mathbf{u}$ and $\mathbf{v}$, which are column vectors of dimensions $p_u$ and $p_v$.
The latent variables $\mathbf{u}$ contribute to both the explanatory and objective variables; the dependence between explanatory and objective variables is introduced by these latent variables.
In contrast, the latent variables $\mathbf{v}$ contribute only to the explanatory variables and are introduced to capture the input-specific variations in the explanatory variables.
We refer to the subspace spanned by $\mathbf{u}$ as the shared latent space, and the subspace spanned by $\mathbf{v}$ as the unshared latent space or input-specific latent space.
Then, the prior distribution of the latent variables and the conditional distribution given the latent variables are represented by normal distributions, based on the conditional independence assumption:
\begin{align}
	p(\mathbf{z}) =& p(\mathbf{u}) p(\mathbf{v}) = \mathcal{N}(\mathbf{u} \mid \bm{\mu}_u, \Sigma_u) \mathcal{N}(\mathbf{v} \mid \bm{\mu}_v, \Sigma_v), \\
	p(\mathbf{x}, \mathbf{y} \mid \mathbf{z}) =& p(\mathbf{x} \mid \mathbf{z}) p(\mathbf{y} \mid \mathbf{z}), \\
	p(\mathbf{x} \mid \mathbf{z}) =& \mathcal{N}(\mathbf{x} \mid \bm{\mu}_x + W_{xu} (\mathbf{u} - \bm{\mu}_u) + W_{xv} (\mathbf{v} - \bm{\mu}_v), \Psi_x), \label{eq:x_given_z} \\
	p(\mathbf{y} \mid \mathbf{z}) =& \mathcal{N}(\mathbf{y} \mid \bm{\mu}_y + W_{yu} (\mathbf{u} - \bm{\mu}_u), \Psi_y), \label{eq:y_given_z}
\end{align}
where the parameters denoted by $\bm{\mu}$ are corresponding mean vectors, and the diagonal matrices $\Sigma_u$, $\Sigma_v$, $\Psi_x$, $\Psi_y$ are corresponding covariance matrices of the normal distributions.
$W_{xu}$, $W_{xv}$, and $W_{yu}$ are $p_x \times p_u$, $p_x \times p_v$, and $p_y \times p_u$ matrices, respectively.
$\Psi_x$ and $\Psi_y$ are referred to as the unique variance and represent the variance of observed variables that is not accounted for by the latent variables.
When $\Psi_y = \Psi_x = \sigma^2 I$, where $I$ is the identity matrix, the model reduces to the formulation presented in the textbook~\cite{Murphy2012}.

From the above expressions, we can read the expression for the complete data likelihood including the latent variables as 
\begin{align}
	& p(\mathbf{x}, \mathbf{y}, \mathbf{z}) \notag \\
    =&
	\mathcal{N}(\mathbf{x} \mid \bm{\mu}_x + W_{xu}\mathbf{u} + W_{xv} \mathbf{v}, \Psi_x)
	\mathcal{N}(\mathbf{y} \mid \bm{\mu}_y + W_{yu}\mathbf{u}, \Psi_y)
	\mathcal{N}(\mathbf{u} \mid \bm{\mu}_u, \Sigma_u) \mathcal{N}(\mathbf{v} \mid \bm{\mu}_v, \Sigma_v), \notag \\
	=&
	\frac{1}{(2\pi)^{p/2}} \exp\biggl\{
	-\frac{1}{2} (\mathbf{x} - \bm{\mu}_x)^T \Psi_x^{-1}(\mathbf{x} - \bm{\mu}_x)
	-\frac{1}{2} (\mathbf{y} - \bm{\mu}_y)^T \Psi_y^{-1}(\mathbf{y} - \bm{\mu}_y) \notag \\
	& + \frac{1}{2} \mathbf{x}^T \Psi_x^{-1} (W_{xu} \mathbf{u} + W_{xv} \mathbf{v}) \times 2
	+ \frac{1}{2} \mathbf{y}^T \Psi_y^{-1} W_{yu} \mathbf{u} \times 2
	- \frac{1}{2} \mathbf{u}^T (\Sigma_u^{-1} + W_{ux} \Psi_x^{-1} W_{xu} + W_{uy} \Psi_y^{-1} W_{yu}) \mathbf{u} \notag \\
	& - \frac{1}{2} \mathbf{v}^T (\Sigma_v^{-1} + W_{vx} \Psi_x^{-1} W_{xv} ) \mathbf{v}
	- \frac{1}{2} \mathbf{u}^T W_{ux} \Psi_x^{-1} W_{xv} \mathbf{v} \times 2
	\biggr\},
\end{align}
where we have defined $W_{xu}^T = W_{ux}$, $W_{xv}^T = W_{vx}$, and $W_{yu}^T = W_{uy}$, and $T$ stands for matrix transposition.
The above expression implies that the distribution $p(\mathbf{x}, \mathbf{y}, \mathbf{z})$ is the normal distribution with the covariance matrix given by
\begin{align}
    p(\mathbf{x}, \mathbf{y}, \mathbf{z}) =& \mathcal{N}(\mathbf{x}, \mathbf{y}, \mathbf{z} \mid \bm{\mu}, \Sigma), \\
    \bm{\mu} =& [\bm{\mu}_x; \bm{\mu}_y; \bm{\mu}_z], \\
    \Lambda =&
    \begin{bmatrix} \Lambda_x & \Lambda_{xy} & \Lambda_{xz} \\
                    \Lambda_{yx} & \Lambda_y & \Lambda_{yz} \\
                    \Lambda_{zx} & \Lambda_{zy} & \Lambda_z
    \end{bmatrix} = 
    \begin{bmatrix} \Psi^{-1} & - \Psi^{-1} W \\ - W^T \Psi^{-1} & \Sigma_z^{-1} + W^T \Psi^{-1} W \end{bmatrix}, \notag \\
    =&
    \begin{bmatrix}
		\Psi_x^{-1} & O & - \Psi_x^{-1} W_{xu} & - \Psi_x^{-1} W_{xv} \\
		O & \Psi_y^{-1} & - \Psi_y^{-1} W_{yu} & O \\
		- W_{ux} \Psi_x^{-1} & - W_{uy}\Psi_y^{-1} & \Sigma_u^{-1} + W_{ux} \Psi_x^{-1} W_{xu} + W_{uy} \Psi_y^{-1} W_{yu} & W_{ux} \Psi_x^{-1} W_{xv} \\
		- W_{vx}\Psi_x^{-1} & O & W_{vx} \Psi_x^{-1} W_{xu} & \Sigma_v^{-1} + W_{vx} \Psi_x^{-1} W_{xv}
	\end{bmatrix},
\end{align}
\begin{align}
	\Sigma = & \Lambda^{-1}, \notag \\
	=&
    \begin{bmatrix} \Sigma_x & \Sigma_{xy} & \Sigma_{xz} \\
                    \Sigma_{yx} & \Sigma_y & \Sigma_{yz} \\
                    \Sigma_{zx} & \Sigma_{zy} & \Sigma_z
    \end{bmatrix} = 
	\begin{bmatrix} \Psi + W \Sigma_z W^T & W \Sigma_z \\ \Sigma_z W^T & \Sigma_z \end{bmatrix}, \notag \\
	=&
	\begin{bmatrix}
		\Psi_x + W_{xu} \Sigma_{u} W_{ux} + W_{xv} \Sigma_{v} W_{vx} & W_{xu} \Sigma_{u} W_{uy} & W_{xu} \Sigma_{u} & W_{xv} \Sigma_{v} \\
		W_{yu} \Sigma_{u} W_{ux} & \Psi_y + W_{yu} \Sigma_{u} W_{uy} & W_{yu} \Sigma_{u} & O \\
		\Sigma_{u} W_{ux} & \Sigma_{u} W_{uy} & \Sigma_{u} & O \\
		\Sigma_{v} W_{vx} & O & O & \Sigma_{v}
	\end{bmatrix}, \\
    \Psi =&
    \begin{bmatrix} \Psi_x & O \\ O & \Psi_y \end{bmatrix}, \hspace{0.3cm}
    \Sigma_z =
    \begin{bmatrix} \Sigma_u & O \\ O & \Sigma_v \end{bmatrix}, \hspace{0.3cm}
    W = \begin{bmatrix}
        W_{xu} & W_{xv} \\ 
        W_{yu} & O
    \end{bmatrix}
    = \begin{bmatrix} W_x \\ W_y \end{bmatrix},
\end{align}
where semicolon denotes a vertical concatenation of vectors $[\bm{\mu}_x; \bm{\mu}_y] = [\bm{\mu}_x^T, \bm{\mu}_y^T]^T$, and $O$ is a matrix whose elements are all zero.
The above PLS regression can be viewed as a special case of factor analysis, in which the factor loading matrix is block-partitioned.
Hence, we refer to the parameter $W$ as the factor loading matrix, or simply the loading matrix.

Then, the observed distribution $p(\mathbf{x}, \mathbf{y})$ is also expressed analytically:
\begin{align}
	p(\mathbf{x}, \mathbf{y}) =& \mathcal{N}([\mathbf{x}; \mathbf{y}] \mid [\bm{\mu}_x; \bm{\mu}_y], \Psi + W \Sigma_z W^T), \label{eq:likelihood} \\
	p(\mathbf{x}) =& \mathcal{N}( \mathbf{x} \mid \bm{\mu}_x, \Sigma_x), \\
	p(\mathbf{y} \mid \mathbf{x}) =& \mathcal{N}(\mathbf{y} \mid \bm{\mu}_y + W_{yu} \Sigma_u W_{ux} \Sigma_x^{-1} (\mathbf{x} - \bm{\mu}_x), \Sigma_{y|x}), \label{eq:y_given_x}\\
    \Sigma_{y|x} =& \Psi_y + W_{yu} \Sigma_u W_{uy} - W_{yu} \Sigma_u W_{ux} \Sigma_x^{-1} W_{xu} \Sigma_u W_{uy}.
\end{align}
Although it is called probabilistic PLS regression, it is in fact a probabilistic generative model.
The Eq.~(\ref{eq:likelihood}) implies that the probabilistic PLS regression merely represents the covariance matrix as a low-rank perturbation of the diagonal matrix $\Psi$.
While the Eq.~(\ref{eq:y_given_x}) can be viewed as applying linear regression using composite variables $\mathbf{m}^{(x)} \equiv \Sigma_u W_{ux} \Sigma_x^{-1} (\mathbf{x} - \bm{\mu}_x)$, it can also be interpreted as expressing the regression coefficient matrix $B_{yx}$ in a low-rank form $B_{yx} = \Sigma_u W_{ux} \Sigma_x^{-1}$.
Thus, the probabilistic PLS regression can be implemented without explicitly considering underlying latent variables.

Furthermore, the conditional distributions of the latent variables given the observed variables are also obtained as follows:
\begin{align}
    p(\mathbf{z} \mid \mathbf{x}, \mathbf{y}) =&
    \mathcal{N}(\mathbf{z} \mid \mathbf{m}^{(x,y)}, \Sigma_{z|(xy)}), \\
    \Sigma_{z|(xy)} =&
    \Lambda_z^{-1} = [\Sigma_z^{-1} + W^T \Psi^{-1} W]^{-1}, \\
    \mathbf{m}^{(x,y)} =&
    \bm{\mu}_{z|(xy)}, \notag \\
    \equiv&
    \bm{\mu}_z + \Sigma_{z(xy)} \Sigma_{(xy)(xy)}^{-1} \begin{bmatrix} \mathbf{x} - \bm{\mu}_x \\ \mathbf{y} - \bm{\mu}_y \end{bmatrix}, \notag \\
    =&
    \bm{\mu}_z + \Sigma_{z|(xy)} W^T \Psi^{-1} \begin{bmatrix} \mathbf{x} - \bm{\mu}_x \\ \mathbf{y} - \bm{\mu}_y \end{bmatrix},
\end{align}
\begin{align}
    p(\mathbf{z} \mid \mathbf{x}) =&
    \mathcal{N}(\mathbf{z} \mid \mathbf{m}^{(x)}, \Sigma_{z|x}), \\
    \Sigma_{z|x} =&
    [\Sigma_{(yz)|x}]_{zz} = [\Lambda_{(yz)(yz)}^{-1}]_{zz}, \notag \\
    =& [\Lambda_{z} - \Lambda_{zy}\Lambda_{yy}^{-1} \Lambda_{yz}]^{-1} = [\Sigma_z^{-1} + W_x^T \Psi_x^{-1} W_x]^{-1}, \\
    \mathbf{m}^{(x)} =&
    \bm{\mu}_{z|x}, \notag \\
    \equiv&
    \bm{\mu}_z + \Sigma_{zx} \Sigma_{x}^{-1} (\mathbf{x} - \bm{\mu}_x), \notag \\
    =&
    \bm{\mu}_z - [\Lambda_{(yz)(yz)}^{-1}]_{z(yz)} \Lambda_{(yz)x} (\mathbf{x} - \bm{\mu}_x), \notag \\
    =&
    \bm{\mu}_z - \begin{bmatrix} - [\Lambda_{(yz)(yz)}^{-1}]_{zz} \Lambda_{zy} \Lambda_{yy} \;\;\;\; & [\Lambda_{(yz)(yz)}^{-1}]_{zz} \end{bmatrix} \begin{bmatrix} O \\ -W_x^T \Psi_x^{-1} \end{bmatrix} (\mathbf{x} - \bm{\mu}_x), \notag \\
    =&
    \bm{\mu}_z + \Sigma_{z|x} W_x^T \Psi_x^{-1} (\mathbf{x} - \bm{\mu}_x),
\end{align}
\begin{align}
    p(\mathbf{z} \mid \mathbf{y}) =& \mathcal{N}(\mathbf{z} \mid \mathbf{m}^{(y)}, \Sigma_{z|y}), \\
    \Sigma_{z|y} =& [\Lambda_{(xz)(xz)}^{-1}]_{zz} = [\Lambda_{z} - \Lambda_{zx} \Lambda_{xx}^{-1} \Lambda_{xz}]^{-1}, \notag \\
    =& 
    \begin{bmatrix} \Sigma_u^{-1} + W_{uy} \Psi_y^{-1} W_{yu} & O \\ O & \Sigma_v^{-1} \end{bmatrix}^{-1}
    =
    \begin{bmatrix} [\Sigma_u^{-1} + W_{uy} \Psi_y^{-1} W_{yu}]^{-1} & O \\ O & \Sigma_v \end{bmatrix}, \\
    \mathbf{m}^{(y)} =&
    \bm{\mu}_{z|y}, \notag \\
    \equiv&
    \bm{\mu}_z + \Sigma_{zy} \Sigma_{y}^{-1} (\mathbf{y} - \bm{\mu}_y), \notag \\
    =&
    \bm{\mu}_z - [\Lambda_{(xz)(xz)}^{-1}]_{z(xz)} \Lambda_{(xz) y} (\mathbf{y} - \bm{\mu}_y), \notag \\
    =&
    \bm{\mu}_z - \begin{bmatrix} - [\Lambda_{(xz)(xz)}^{-1}]_{zz} \Lambda_{zx} \Lambda_{xx} \;\;\;\; & [\Lambda_{(xz)(xz)}^{-1}]_{zu} \;\;\;\;  & [\Lambda_{(xz)(xz)}^{-1}]_{zv} \end{bmatrix}
    \begin{bmatrix} O \\ - W_{uy} \Psi_y^{-1} \\ O \end{bmatrix} (\mathbf{y} - \bm{\mu}_y), \notag \\
    =&
    \begin{bmatrix} \bm{\mu}_u + [\Sigma_u^{-1} + W_{uy} \Psi_y^{-1} W_{yu}]^{-1} W_{uy} \Psi_y^{-1} (\mathbf{y} - \bm{\mu}_y) \\ \bm{\mu}_v \end{bmatrix}
    =
    \begin{bmatrix} \bm{\mu}_u + \Sigma_{u|y} W_{uy} \Psi_y^{-1} (\mathbf{y} - \bm{\mu}_y) \\ \bm{\mu}_v \end{bmatrix},
\end{align}
where the submatrix such as $\Lambda_{x(yz)}$ is defined by
\begin{align}
    \Lambda_{x(yz)} \equiv \begin{bmatrix} \Lambda_{xy} & \Lambda_{xz} \end{bmatrix}, \hspace{0.3cm} \text{etc.},
\end{align}
$[\Lambda_{(xz)(xz)}]_{zz}$ denotes the submatrix of $\Lambda_{(xz)(xz)}$ with both rows and columns indexed by $z$, the inverse notation $\Lambda_{xx}^{-1} \equiv [\Lambda_{xx}]^{-1}$ means the inverse matrix of the submatrix $\Lambda_{xx}$, and $\Sigma_{x|y} \equiv \Sigma_{xx} - \Sigma_{xy} \Sigma_{yy}^{-1} \Sigma_{yx}$ is the Schur complement of the matrix $\Sigma_{(xy)(xy)}$ with respect to $\Sigma_{yy}$.
Hereafter, we refer to $\mathbf{m}^{(x,y)}$, $\mathbf{m}^{(x)}$ and $\mathbf{m}^{(y)}$ as the factor scores.

Although the predictive distribution given explanatory variables is represented by Eq.~(\ref{eq:y_given_x}), it is informative to express it in terms of factor scores for interpreting the meaning of the latent space.
The predictive distribution can be approximated by
\begin{align}
    p(\mathbf{y} \mid \mathbf{x}) =& 
    \int d \mathbf{z} p(\mathbf{y} \mid \mathbf{z}) p(\mathbf{z} \mid \mathbf{x}), \notag \\
    \simeq &
    p(\mathbf{y} \mid \hat{\mathbf{z}}), \hspace{1cm} \hat{\mathbf{z}} = \mathrm{argmax} \, p(\mathbf{z} \mid \mathbf{x}) = \mathbf{m}^{(x)}.
\end{align}
Therefore, the factor scores $\mathbf{m}^{(x)}$ can be regarded as composite explanatory variables constructed by compressing the explanatory variables through a linear combination, and the factor loading matrix $W_{yu}$ can be interpreted as the regression coefficient for the variables $\mathbf{m}^{(x)}$.

In the above model, the observed distribution $p(\mathbf{x}, \mathbf{y})$ does not depend on the parameters $\bm{\mu}_u$ and $\bm{\mu}_v$.
In addition, the covariance matrix of the latent variables $\Sigma_z$ can be renormalized into the factor loading matrix as $W' = W \Sigma_z^{1/2}$.
Therefore, hereafter we set $\bm{\mu}_u = \bm{0}$, $\bm{\mu}_v = \bm{0}$, $\Sigma_u = I$, and $\Sigma_v = I$ without loss of generality.
Furthermore, in this paper, we impose the condition that each column of the factor loading submatrices $W_{yu}$ and $W_{xv}$ be orthogonal, respectively.

\subsection{Improper solutions}
It is natural to introduce unique variance, which allows each observed variable given the latent variables to have specific variance, as shown in Eqs.~(\ref{eq:x_given_z}, \ref{eq:y_given_z}).
The same attempt has already been made in the previous study~\cite{Zheng2016}.
However, the introduction of unique variance leads to improper solutions in the maximum likelihood estimation of model parameters as in the case of factor analysis.
In fact, the maximum likelihood estimation of Eq.~(\ref{eq:likelihood}) may cause certain elements of the unique variances to approach zero.
Then, the corresponding elements of the factor scores, such as $\mathbf{m}^{(x)}$, take large values, and the interpretation of the latent variables becomes difficult.
Furthermore, improper solutions cause difficulties in numerical parameter estimation.
Specifically, when the parameters become trapped in a region corresponding to an improper solution, convergence of numerical optimization becomes slow.

To avoid improper solutions in factor analysis, we recently proposed a prescription~\cite{Arai2022}, which imposes a constraint that the row norms of the scaled factor loading matrix are identical for all features.
In this paper, we propose a similar norm constraint for probabilistic PLS regression as follows:
\begin{align}
    & \bar{W} \equiv \Psi^{-1/2} W, \hspace{0.5cm} \mathrm{diag}(\bar{W} \bar{W}^T) = c^2 I, \hspace{0.5cm} \tilde{W} = \frac{1}{c} \bar{W}, \\
    & \hat{W} \equiv \mathrm{diag}(\Sigma_x)^{-1/2} W, \hspace{0.5cm} \mathrm{diag}(\hat{W} \hat{W}^T) = h^2 I, \\
    & h^2 = \frac{c^2}{1 + c^2}, \hspace{0.5cm} c^2 = \frac{h^2}{1 - h^2}, \hspace{0.5cm} c \ge 0, \hspace{0.5cm} 0 \le h \le 1,
\end{align}
where $\mathrm{diag}()$ denotes the diagonal elements or diagonal matrix depending on the context, and the row norms of the normalized factor loading matrix $\tilde{W}$ are normalized to one.
Throughout this paper, we impose this norm constraint in our PLS regression.

This norm constraint is also useful in taking a classical limit of probabilistic PLS regression.
The classical limit is obtained by taking the variances of the observational noise to zero $\Psi \rightarrow O$.
This limit is equivalent to taking $c \rightarrow \infty$ and $h \rightarrow 1$.
Since the unique variances are expressed as $\Psi_x = \mathrm{diag}(\Sigma_x) / (1 + c^2)$, and in this limit, the factor loading matrix can be approximated by $W_x \simeq \mathrm{diag}(\Sigma_x)^{1/2} \tilde{W}_x$, the factor scores has the following classical limit:
\begin{align}
    \mathbf{m}^{(x)} \simeq&
    \frac{I}{I + (1+c^2) \tilde{W}_x^T \tilde{W}_x} \tilde{W}_x^T \mathrm{diag}(\Sigma_x)^{1/2} (1+c^2) \mathrm{diag}(\Sigma_x)^{-1} (\mathbf{x} - \bm{\mu}_x), \notag \\
    \simeq&
    (\tilde{W}_x^T \tilde{W}_x)^{-1} \tilde{W}_x^T \mathrm{diag}(\Sigma_x)^{-1/2} (\mathbf{x} - \bm{\mu}_x).
\end{align}
Therefore, the classical limit of probabilistic PLS regression can be viewed as the projection of the standardized explanatory variables $\mathrm{diag}(\Sigma_x)^{-1/2} (\mathbf{x} - \bm{\mu}_x)$ onto the subspace spanned by the column vectors of the normalized factor loading matrix $\tilde{W}_x$.

\subsection{Identifiability}
As in the case of factor analysis, probabilistic PLS regression has rotational and sign-flip degrees of freedom in the latent space.
In fact, the likelihood function expressed as Eq.~(\ref{eq:likelihood}) is invariant under the following independent rotations and sign-flip transformations of the shared and unshared latent subspaces:
\begin{align}
	R = \begin{bmatrix}
			R_{u} & O \\
			O & R_{v}
		\end{bmatrix},
    \hspace{0.5cm}
    \Delta = \mathrm{diag}(\bm{\delta}), \hspace{0.5cm}
    \delta_{i} \in \{ -1, 1 \}, \hspace{0.3cm} \{i=1,2,\dots, p_u + p_v\}, \label{eq:rotation}
\end{align}
where $R_u$ and $R_v$ are $p_u$- and $p_v$-dimensional rotation matrix, and satisfy $R_u^T=R_u^{-1}$ and $R_v^T = R_v^{-1}$, respectively.
In order for the model to be identifiable, the rotational and reflectional degrees of freedom must be fixed.
The following theorem states the conditions for the model to be identifiable.

\begin{theorem} \label{theorem:pls_identifiability}
    Let the observed variables $\mathbf{x}$ and $\mathbf{y}$ be generated by the probabilistic PLS regression model of Eq.~(\ref{eq:likelihood}).
    If the numbers of the latent dimensions satisfy $p_u + p_v < p_x + p_y$, $p_v \le p_x$, and $p_u \le p_y$, and for the scaled factor loading matrix defined by 
    \begin{align}
        \hat{W} = \mathrm{diag}(\Sigma)^{-1/2} W = 
        \begin{bmatrix}
            \hat{W}_{xu} & \hat{W}_{xv} \\
            \hat{W}_{yu} & O
        \end{bmatrix},
    \end{align}
    their row norms have the same value for all features, $\mathrm{diag}(\hat{W} \hat{W}^T) = H^2 = h^2 I$, and their value satisfies $(0 \le h \le 1)$, and for the scaled factor loading submatrices $\hat{W}_{yu}$ and $\hat{W}_{xv}$ the symmetric matrices $\hat{W}_{yu} \hat{W}_{uy}$ and $\hat{W}_{xv} \hat{W}_{vx}$ are diagonalized and their nonzero nondegenerate eigenvalues are sorted in descending order, respectively, i.e.,
    \begin{align}
        \hat{W}_{yu} \hat{W}_{uy} =&
        \mathrm{diag}(\hat{\bm{\omega}}_{(y,u)}^2) = \hat{\Omega}_{(y,u)}^2, \hspace{0.3cm} [\hat{\omega}_{(y,u)}]_i > [\hat{\omega}_{(y,u)}]_j, \hspace{0.3cm} (i < j),
        \hspace{0.3cm} i, j \in \{1,2,\dots, p_u\}, \\
        \hat{W}_{xv} \hat{W}_{vx} =&
        \mathrm{diag}(\hat{\bm{\omega}}_{(x,v)}^2) = \hat{\Omega}_{(x,v)}^2, \hspace{0.3cm} [\hat{\omega}_{(x,v)}]_i > [\hat{\omega}_{(x,v)}]_j, \hspace{0.3cm} (i < j),
        \hspace{0.3cm} i, j \in \{1,2,\dots, p_v\},
    \end{align}
    and the row sums of the submatrices $\hat{W}_{yu}$ and $\hat{W}_{xv}$ are nonnegative,
    \begin{align}
        \sum_{i=1}^{p_y} [\hat{W}_{yu}]_{ij} \ge & 0, \hspace{0.5cm} j \in \{1, 2, \dots, p_u \}, \\
        \sum_{i=1}^{p_x} [\hat{W}_{xv}]_{ij} \ge & 0, \hspace{0.5cm} j \in \{1, 2, \dots, p_v \},
    \end{align}
    where $[\hat{W}_{yu}]_{ij}$ denotes the $(i,j)$-th element of $\hat{W}_{yu}$.
    Then $\Sigma_1 = \Sigma_2$ means $h_1 = h_2$ and $W_1 = W_2$.
\end{theorem}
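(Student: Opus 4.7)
The plan is to peel off the parameters one by one from the observed covariance. Assume $\Sigma_1=\Sigma_2$, with both parameter sets satisfying the hypotheses; I aim to prove $h_1=h_2$, hence $\Psi_1=\Psi_2$ and $W_1W_1^T=W_2W_2^T$, and finally $W_1=W_2$ by working through the block structure of $W$.

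First, I would translate the norm constraint into an identity for the unique variance. Since $\hat W=\mathrm{diag}(\Sigma)^{-1/2}W$, the condition $\mathrm{diag}(\hat W\hat W^T)=h^2 I$ rewrites as $\mathrm{diag}(WW^T)=h^2\,\mathrm{diag}(\Sigma)$, and combining with $\Sigma=\Psi+WW^T$ (where $\Psi$ is diagonal) gives $\Psi=(1-h^2)\,\mathrm{diag}(\Sigma)$. Thus once $h$ is determined, both $\Psi$ and $WW^T=\Sigma-\Psi$ are determined. Identifying $h$ itself is the central difficulty: because the PLS covariance fits the form of norm-constrained factor analysis with factor rank $p_u+p_v$, I would invoke the identifiability result for $h$ in that setting from~\cite{Arai2022}, noting that the hypothesis $p_u+p_v<p_x+p_y$ is precisely the Ledermann-type rank bound needed there, and verifying that the argument is insensitive to the block-zero pattern $W_y=[W_{yu},O]$ specific to PLS. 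This is the step I expect to be the main obstacle.

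Given $\Psi$ and $WW^T$, I would exploit the block form
\begin{align*}
WW^T=\begin{bmatrix}W_{xu}W_{ux}+W_{xv}W_{vx} & W_{xu}W_{uy}\\ W_{yu}W_{ux} & W_{yu}W_{uy}\end{bmatrix}.
\end{align*}
The $(y,y)$ block yields $W_{yu,1}W_{uy,1}=W_{yu,2}W_{uy,2}$ and, after scaling by $\mathrm{diag}(\Sigma_y)^{-1/2}$, the same identity for the scaled matrices appearing in the theorem statement. Because these symmetric matrices are forced into diagonal form with distinct, strictly positive, descending eigenvalues, the two factorizations share the same singular structure and singular values, and the residual column sign indeterminacy is killed by the nonnegative row-sum condition $\sum_i [\hat W_{yu}]_{ij}\ge 0$; hence $W_{yu,1}=W_{yu,2}$.

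The remaining blocks close out quickly. The $(y,x)$ block reads $W_{yu}W_{ux,1}=W_{yu}W_{ux,2}$; the nondegeneracy of the eigenvalues above ensures that $W_{yu}$ has full column rank $p_u$ (using $p_u\le p_y$), so left-multiplication by a left inverse of $W_{yu}$ gives $W_{xu,1}=W_{xu,2}$. Finally, subtracting the now-known $W_{xu}W_{ux}$ from the $(x,x)$ block produces $W_{xv,1}W_{vx,1}=W_{xv,2}W_{vx,2}$, and the same argument applied to $W_{yu}$---diagonal form with sorted distinct nonzero eigenvalues, plus nonnegative row sums, with $p_v\le p_x$ supplying the required column rank $p_v$ for $W_{xv}$---gives $W_{xv,1}=W_{xv,2}$. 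This completes the chain $\Sigma_1=\Sigma_2\Rightarrow h_1=h_2$ and $W_1=W_2$, with all subsequent steps being linear-algebraic bookkeeping once the spectral and sign conditions fix each gauge in turn.
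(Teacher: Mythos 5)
Your proposal is correct and follows essentially the same route as the paper: determine $h$ from the rank-deficiency of $\hat{W}\hat{W}^T$ in the correlation matrix, recover $\hat{W}_{yu}$ from the $yy$ block via uniqueness of the spectral decomposition with the sign fixed by the row-sum condition, recover $\hat{W}_{xu}$ from the $xy$ block by inverting against $\hat{W}_{yu}$, and recover $\hat{W}_{xv}$ from the residual of the $xx$ block. The step you flag as the main obstacle is exactly the paper's Lemma~\ref{lemma:lemma_1}: since $p_u+p_v<p_x+p_y$ makes $\hat{W}\hat{W}^T$ rank-deficient, $1-h^2$ is the smallest eigenvalue of $\tilde{\Sigma}$ and is therefore parametrization-independent, with the block-zero pattern playing no role (and note the slip near the end where you write ``the same argument applied to $W_{yu}$'' when you mean $\hat{W}_{xv}$).
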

The proof of the theorem is given in Appendix~\ref{sec:proof_pls}.
An important implication of the theorem is that when the objective variable is univariate, model identifiability requires that the shared latent space be one-dimensional.

Here, we note the difference between the proposed model and the previous identifiable model studied in Ref.~\cite{Bouhaddani2018}.
The model of the previous study corresponds to the following parameters when expressed using the notation in this paper:
\begin{align}
    \Sigma_u \ne&
    \sigma_u^2 I, \hspace{0.5cm} \Sigma_v = \sigma_v^2 I, \hspace{0.5cm} \Psi_x = \sigma_x^2 I, \hspace{0.5cm} \Psi_y = \sigma_y^2 I, \\
    W =&
    \begin{bmatrix} W_{xu} & W_{xv} \\ W_{yu} & O \end{bmatrix}
    = \begin{bmatrix} V B^{-1} & \;  - V B^{-1} \\ C & O \end{bmatrix}, \label{eq:loading_previous}
\end{align}
where the numbers of the shared and unshared latent dimensions must be the same $p_u = p_v$, $V$ is a $p_x \times p_u$ matrix, and $B$ is a $p_u$-dimensional diagonal matrix.
In the previous study, the above model is shown to be identifiable up to the sign of the latent space under the following three conditions:
\begin{align}
    & 0 < p_u = p_v < \mathrm{min}(p_x, p_y), \label{eq:dimension_constraint_previous} \\
    & \mathrm{diag}(B) > \bm{0}, \\
    & V^T V = C^T C = I. \label{eq:normalization_previous}
\end{align}
However, such an identifiable model impairs representational capability.
In fact, first, this model cannot accommodate the difference in the scale of the observed variables $(\mathbf{x}, \mathbf{y})$, because it does not include unique variances.
Next, the model imposes the constraint that the loading matrices for the shared and unshared latent subspaces of the explanatory variables, $W_{xu}$ and $W_{xv}$, must behave identically, which limits the representational capability compared to the model considered in this paper.
Furthermore, the condition on the dimensionality of the latent space, as shown in Eq.~(\ref{eq:dimension_constraint_previous}), limits the representability of the model when there is a large asymmetry between the numbers of explanatory and objective variables.
This is because, in such cases, the condition prevents the use of a sufficiently large number of latent dimensions.
Lastly, the orthonormality condition described in Eq.~(\ref{eq:normalization_previous}) is a constraint that cannot be achieved with rotational degrees of freedom in the shared and unshared latent space, again compromising the representational capability of the model.
The reduction in representational capability due to this parameter and dimensionality constraints leads to a decrease in predictive performance.
The decrease in predictive performance is numerically demonstrated in Sec.~\ref{sec:numerical_validation}.

\subsection{Contribution ratio} \label{sec:contribution_ratio}
As in probabilistic principal component analysis and factor analysis, the contribution ratio of the latent space in probabilistic PLS regression can be defined based on the concept of explained variance.
In our previous study, we defined the explained variance in factor analysis by the variance of the denoised observed variables when the latent variables follow a standard normal distribution $\mathbf{z} \sim \mathcal{N}(\bm{0}, I)$~\cite{Arai2021}.
We define the explained variance in probabilistic PLS regression in a similar manner.
We define the contribution ratio by the reproducibility of the variation of the denoised objective variables when the latent variables follow a standard normal distribution.

From the conditional distribution of objective variables, we define the following linear combination of the latent variables scaled by the unique variances $\bar{\bm{\eta}}_y$ in order to treat all variables equally:
\begin{align}
    p(\mathbf{y} \mid \mathbf{z}) =& \mathcal{N}(\mathbf{y} \mid \bm{\eta}_y = \bm{\mu}_y + W_{yu} \mathbf{u}, \Psi_y), \\
    \bar{\bm{\eta}}_y \equiv & \Psi_y^{-1/2} \bm{\eta}_y.
\end{align}
Then, the variance of the linear combination of the latent variables $\bar{\bm{\eta}}_y$ can be expressed as follows:
\begin{align}
    \mathrm{Cov}[\bar{\bm{\eta}}_y] =& \mathrm{E}[\Psi_y^{-1/2}(\bm{\eta}_y - \bm{\mu}_y)(\bm{\eta}_y - \bm{\mu}_y)^T \Psi_y^{-1/2}], \notag \\
    =&
    \Psi_y^{-1/2} ( W_{yu} \mathrm{E}[\mathbf{u}\mathbf{u}^T] W_{uy}) \Psi_y^{-1/2}, \notag \\
    =&
    \bar{W}_{yu} \bar{W}_{uy}.
\end{align}
The total variance of $\bar{\bm{\eta}}_y$ is the sum of the diagonal elements of the above covariance matrix,
\begin{align}
    \mathrm{Tr}[\mathrm{Cov}[\bar{\bm{\eta}}_y]] =&
    \mathrm{Tr}[\bar{W}_{yu} \bar{W}_{uy}]
    =
    \mathrm{Tr}[\bar{W}_{uy} \bar{W}_{yu}], \notag \\
    =&
    \mathrm{Tr}[\Omega_{(y,u)}^2]
    =
    \sum_{i=1}^{p_u} [\bm{\omega}_{(y,u)}^2]_i,
\end{align}
and this is the sum of the squared column norms of the matrix $\bar{W}_{yu}$.
Therefore, we define the contribution ratio of the latent space as the proportion of the variation in the denoised objective variables $\bar{\bm{\eta}}_y$ that can be accounted for by the latent space.
The contribution ratio $P_i$ and cumulative contribution ratio $C_i$ are defined as
\begin{align}
    P_i^{(u)} =&
    \frac{[\bm{\omega}_{(y,u)}^2]_i}{\sum_{j=1}^{p_u} [\bm{\omega}_{(y,u)}^2]_j },
    \hspace{0.5cm}
    C_i^{(u)} =
    \frac{\sum_{j=1}^{i}[\bm{\omega}_{(y,u)}^2]_j}{\sum_{j=1}^{p_u} [\bm{\omega}_{(y,u)}^2]_j },
    \hspace{0.5cm} i \in \{1, 2, \dots, p_u \}.
\end{align}
The contribution ratio of the latent space can also be interpreted in terms of the reconstruction error of the scaled regression coefficient $\bar{W}_{yu}$ for $\mathbf{m}^{(x)}$, measured by the squared Frobenius norm~\cite{Arai2022}.

\section{Numerical validation} \label{sec:numerical_validation}
In this section, we first apply the proposed PLS regression to real data.
Then, the estimated parameters are treated as the ground truth, and synthetic datasets are repeatedly generated via random sampling.
By refitting the model to these datasets, we examine the properties of the sampling distribution of the maximum likelihood estimates.

\subsection{Application to HIV protease mutation data}
In this subsection, we numerically validate the proposed PLS regression using a publicly available real dataset, HIV drug resistance data.
We analyzed the mutation of amino acid sequences of Human Immunodeficiency Virus (HIV) type-1.
The dataset was obtained from the HIV drug resistance database~\cite{hiv}.
Details of the database and related datasets can be found in \cite{Rhee2003}.
When an antiretroviral drug is dosed on a patient, the virus becomes resistant to the drug over time by mutating its genes.
This mutation has been observed to be highly cooperative, with each residue in the amino acid sequence mutating not simply stochastically~\cite{Ohtaka2003}.
Although the molecular mechanism of drug resistance has not yet been elucidated, it is expected that the relationship between the correlation pattern of mutations and drug resistance will provide clues to the molecular mechanism of drug resistance.
Among viral resistance to various drugs, we focused on resistance to protease inhibitors.
The data for analysis consists of mutational information on residues in amino acid sequences from position 1 to 99 in protease of viruses isolated from plasma of HIV-1 infected patients, represented by \texttt{P1} to \texttt{P99}, and in vitro susceptibility to various protease inhibitors such as Nelfinavir.
As a preprocessing, the residues in amino acid sequences were converted to binary values, i.e., information on the type of mutation was ignored.
That is, the residues were encoded to $1$ if any mutation, such as insertion, deletion, or substitution from the consensus wild-type amino acid sequence, is present, and encoded to $0$ if there is no mutation from the consensus sequence.
The variables of the drug susceptibility were transformed using a natural logarithmic function to approximately satisfy the normality assumption of the proposed PLS regression.

In this paper, we treated binary variables as if they were continuous, that is, we adopted the method of quantification.
When one adopts the quantification method in factor analysis, two issues have been identified in the previous study~\cite{Arai2022}.
The first issue is inefficient data compression by the latent variables.
Specifically, more numbers of latent dimensions are required to successfully reproduce the correlation structure among observed variables compared to the approach that treats binary variables appropriately.
The second issue arises when the number of latent dimensions is insufficient.
If a dummy variable has low variance, i.e. its mean is below 0.05 or above 0.95, its correlation with other variables tends to be overestimated.
To ensure the validity of the quantification, in this paper, we first performed feature selection by excluding residues in the amid acid sequence with mutation rates below 0.1.
In addition, the number of latent dimensions was selected based on an information criterion.
Therefore, we proceed under the assumption that the quantification is valid.

We employed drug resistances to protease inhibitors as objective variables and the mutations of the amino acid sequence as explanatory variables.
For illustrative purposes, we reduced the number of explanatory and objective variables.
We selected three objective variables, Indinavir ($\texttt{IDV}$), Nelfinavir ($\texttt{NFV}$), and Saquinavir ($\texttt{SQV}$), which had the fewest missing values among all candidate objective variables.
In addition, we selected seven amino acid residues in descending order of correlation with one of the selected objective variables.
These residues are, $\texttt{P54}$, $\texttt{P10}$, $\texttt{P71}$, $\texttt{P46}$, $\texttt{P84}$, $\texttt{P90}$, and $\texttt{P82}$ in order of maximum correlation strength.
To avoid selection bias, we selected a single isolate per patient when multiple isolates were available.
We excluded data points with missing values in the objective variables.
The sample size of the dataset after data cleansing was $N=3264$.
To improve the visibility, the data points in the figures in this section are colored based on the values of the variable $\texttt{IDV}$.

We applied the proposed PLS regression on the HIV protease mutation data.
The model parameters were estimated by maximum likelihood estimation.
Fig.~\ref{fig:likelihood_bic} shows the negative log-likelihood and Bayesian information criterion (BIC)~\cite{bic} as a function of the number of latent dimensions.
From the figure, we observe that the optimal number of latent dimensions, as determined by the minimization of BIC, is $(p_u, p_v) = (3, 5)$.
The likelihood saturates as the number of latent dimensions increases, even in the cells that have been proven to be identifiable.
That is, when the likelihood saturates, the model appears to be overparametrized and thus seems nonidentifiable.
However, when the dimension satisfies $p_u < p_y$, we numerically confirmed that some columns of the factor loading matrix tend to have values close to zero and make only negligible contributions.
As a result, the parameters appear to be identifiable even in cases where identifiability has not been theoretically proven.
Another important observation from the figure is that the same likelihood values appear when $p_u > p_y$.
Specifically, some elements on the anti-diagonal in the likelihood matrix take the same value.
For example, the cells corresponding to $(p_u, p_v) = (3, 5)$ and $(p_u, p_v) = (4, 4)$ in the figure have the same likelihood values.
Since the latter model contains more parameters than the former, this fact suggests that the latter model may be overparametrized and nonidentifiable.
In fact, we numerically confirmed that the models with $p_u > p_y$ are not identifiable.
Such nonidentifiability due to overparametrization can be avoided by selecting the model based on the minimization of the information criterion.
In other words, model selection based on an information criterion selects an identifiable model.

\begin{figure}[htbp]
    \centering \hspace*{-0cm}
    \includegraphics[viewport= 0 0 461.082 162.847, clip]{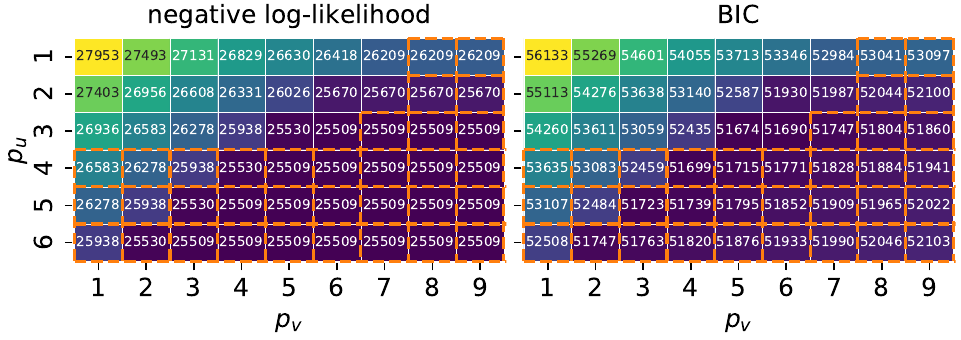}
    \caption{
        Negative log-likelihood and BIC as a function of the number of latent dimensions.
        The cells surrounded by orange dashed lines correspond to models that have not been shown to be identifiable.
    }
    \label{fig:likelihood_bic}
\end{figure}

Next, we compare the predictive performance of the proposed model with that of the previous studies.
As models from previous studies, we adopted the identifiable probabilistic PLS model from the previous study~\cite{Bouhaddani2018} and the classical multi-output PLS model (PLS2) implemented in scikit-learn library for Python ($\texttt{PLSRegression}$)~\cite{sklearn2011}.
To evaluate predictive performance, we first divided the data into training and test sets of equal size.
Since classical PLS regression cannot handle missing values in explanatory variables, we excluded data points with missing values.
Then, we trained the model on the training data, and evaluated the predictive performance using the coefficient of determination on the test data.
The optimal numbers of the latent dimensions selected for the proposed model and the previous identifiable model based on BIC were $(p_u, p_v) = (3, 5)$ and $(p_u, p_v) = (3, 3)$, respectively.
Fig.~\ref{fig:predicted_actual_plot} shows the comparison of predicted and observed values of the objective variables in the proposed model.
In the figure, the coefficient of determination is defined by
\begin{align}
    R^2 = 1 - \frac{\sum_{i=1}^N (y_i - \hat{y}_i)^2}{\sum_{j=1}^N (y_j - \bar{y})^2}, \hspace{1cm} 
    \bar{y} = \frac{1}{N} \sum_{i=1}^N y_i, \hspace{1cm} \hat{y} = \int_{-\infty}^{\infty} y \, p(\mathbf{y} \mid \mathbf{x}) d\mathbf{y},
\end{align}
where $\hat{y}$ is a predicted value of the model.
Table~\ref{table:r_square} presents the coefficient of determination between predicted and observed values for models with different numbers of latent dimensions.
We found that the predictive performance of the proposed model is comparable to that of the classical PLS regression.
In contrast, the identifiable model from the previous study exhibits lower predictive performance compared to these models.
The reason for this is attributed to the constraints imposed on the factor loading matrix, Eq.~(\ref{eq:normalization_previous}).

\begin{figure}[htbp]
    \centering \hspace*{-0cm}
    \includegraphics[viewport= 0 0 451.846 169.799, clip]{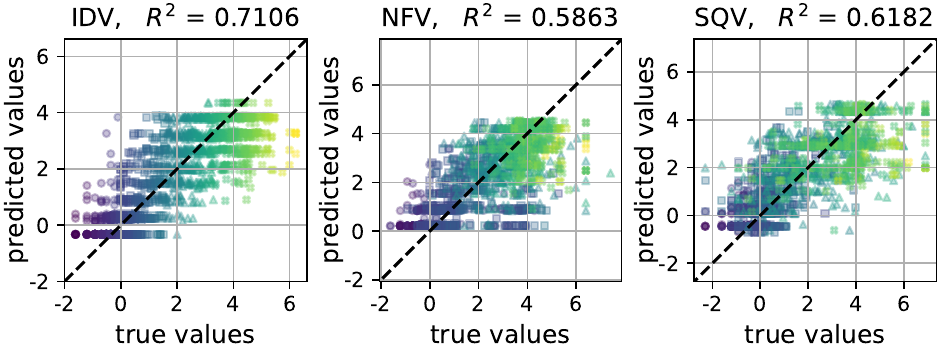}
    \caption{
        Predicted vs. observed plot of the proposed model for three objective variables.
    }
    \label{fig:predicted_actual_plot}
\end{figure}

\begin{table}[hbtp]
    \caption{Coefficient of determination on test data for the proposed model (left), the previous model (center), and the scikit-learn implementation (right), with different numbers of latent dimensions $(p_u, p_v)$ and $\texttt{n\_components}$.}
    \label{table:r_square}
    \begin{minipage}{0.3\linewidth} 
        \centering
        \scriptsize
        \begin{tabular}{c|ccc}
            \hline
            \hspace{0.9cm} & $(1,7)$ & $(2,6)$ & $(3,5)$ \\
            \hline
            $\texttt{IDV}$ & 0.6958 & 0.7079 & 0.7106 \\
            $\texttt{NFV}$ & 0.5819 & 0.5801 & 0.5863 \\
            $\texttt{SQV}$ & 0.5824 & 0.6180 & 0.6182 \\
            \hline
        \end{tabular}
    \end{minipage}
    \hfill
    \begin{minipage}{0.3\linewidth}
        \centering
        \scriptsize
        \begin{tabular}{c|ccc}
            \hline
            \hspace{0.9cm} & $(1,1)$ & $(2,2)$ & $(3,3)$ \\
            \hline
            $\texttt{IDV}$ & 0.7105 & 0.7107 & 0.7066 \\
            $\texttt{NFV}$ & 0.5798 & 0.5750 & 0.5810 \\
            $\texttt{SQV}$ & 0.5338 & 0.5670 & 0.5832 \\
            \hline
        \end{tabular}
    \end{minipage}
    \hfill
    \begin{minipage}{0.3\linewidth}
        \centering
        \scriptsize
        \begin{tabular}{c|ccc}
            \hline
            \hspace{0.9cm} & $(1)$ & $(2)$ & $(3)$ \\
            \hline
            $\texttt{IDV}$ & 0.7117 & 0.7123 & 0.7127 \\
            $\texttt{NFV}$ & 0.5821 & 0.5840 & 0.5868 \\
            $\texttt{SQV}$ & 0.5545 & 0.6140 & 0.6188 \\
            \hline
        \end{tabular}
    \end{minipage}
\end{table}

An important characteristic of probabilistic PLS regression is that it can naturally handle missing values in explanatory variables.
Therefore, we investigated how the predictive performance changes in response to the missing values in training and test datasets.
We randomly introduced missing values into 20 percent of the total entries in the explanatory variable matrix for the training and test datasets described in the previous paragraph, separately.
This procedure was repeated 1000 times and examined how the predictive performance using the coefficient of determination changed.
Fig.~\ref{fig:boxplot_nan} shows the boxplots of the coefficient of determination when missing values (NaN; Not a Number) were introduced, for the proposed and previous identifiable model.
We see that introducing missing values into the training data did not affect the predictive performance while introducing missing values into the test data degraded the predictive performance.
The degree of degradation was similar for both models: the average decrease in the coefficient of determination was 0.0340 for the proposed model and 0.0325 for the previous model when the missing values were introduced into the test data.

\begin{figure}[htbp]
    \centering \hspace*{-0cm}
    \includegraphics[viewport= 0 0 429.083 201.582, clip]{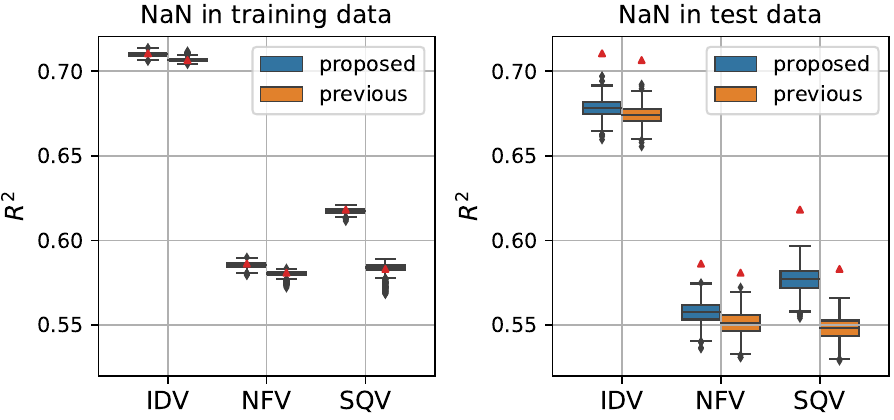}
    \caption{
        Boxplots of coefficient of determination on test data with randomly introduced missing values in training data (left) and in test data (right).
        The red triangles represent the predictive accuracies in the absence of missing values.
    }
    \label{fig:boxplot_nan}
\end{figure}

Another key characteristic of the proposed PLS regression is that it allows for the interpretation of the latent variables, due to the absence of improper solutions and its identifiability.
Fig.~\ref{fig:biplot} shows biplots of the proposed PLS regression for the HIV protease mutation data, where the factor scores are plotted as points and each row of the scaled factor loading matrix is represented as an arrow.
In the biplots, the first and second latent dimensions, which represent the directions with the highest contribution ratios in the shared latent space, are displayed, and the percentages in the axis labels represent the contribution ratio $P_i^{(u)}$.
From the figure, we see that most variations in drug resistance can be explained by the first dimension of the shared latent space.
The variation in objective variables $\texttt{IDV}$ and $\texttt{NFV}$ can be explained primarily by the first latent dimension, whereas the variation in $\texttt{SQV}$ receives additional contributions from the second axis.
This observation explains the fact that increasing the number of shared latent dimensions improves the predictive accuracy for $\texttt{SQV}$.
Although all residues in the amino acid sequence contribute to the first latent dimension, $\texttt{P46}$, $\texttt{P82}$, and $\texttt{P84}$ also contribute to the second latent dimension.
Therefore, these amino acid residues are suggested to contribute to Saquinavir drug resistance.

\begin{figure}[htbp]
    \centering \hspace*{-0cm}
    \includegraphics[viewport= 0 0 468.391 242.411, clip]{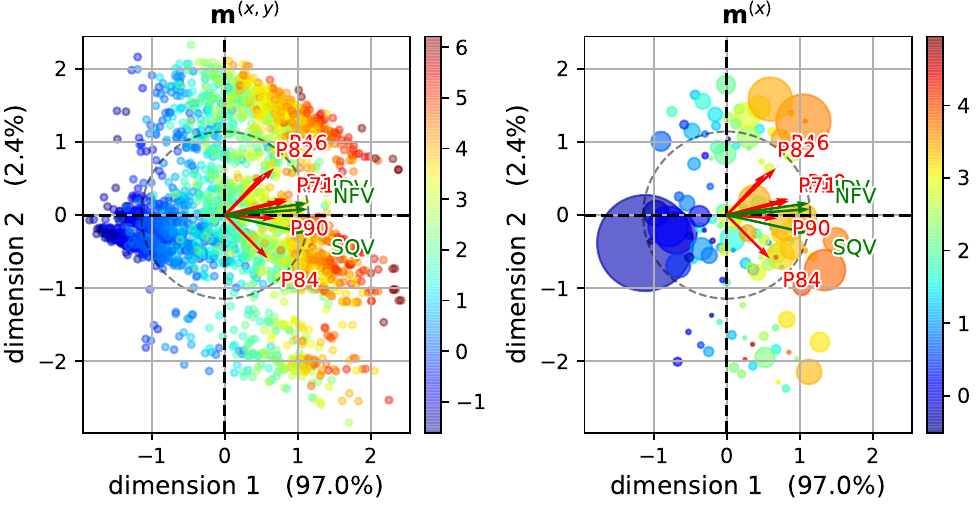}
    \caption{
        Biplots of HIV-1 protease mutation data.
        The markers in the scatter plot are colored from blue to red according to weak to strong drug resistance to the protease inhibitor (Indinavir).
        The areas of the markers are proportional to the sample sizes of corresponding data points.
        The dashed circle represents the maximum possible length of the scaled factor loading vectors.
	    The arrow lengths of the factor loading vectors are scaled to match the points of the factor scores $\mathbf{m}^{(x,y)}$ and $\mathbf{m}^{(x)}$.
    }
    \label{fig:biplot}
\end{figure}

\subsection{Sampling distribution of the maximum likelihood estimates}
In this subsection, we investigate the sampling distribution of the maximum likelihood estimates and examine the consistency of the model parameters.
To investigate the sampling distribution, we generated synthetic data from the model and estimated model parameters using maximum likelihood estimation.
To generate synthetic data, we adopted the model parameters estimated on the HIV protease mutation data in the previous subsection to serve as the ground truth.
We generated 1000 sets of synthetic data with sample sizes of 1000, 3000, and 9000, respectively.
Note that although the residues in the amino acid sequence are inherently discrete variables, the synthetic data generated using the observed joint distribution Eq.~(\ref{eq:likelihood}) are continuous.
We then fitted the model to these datasets using maximum likelihood estimation.
Fig.~\ref{fig:sampling_distribution} shows the sampling distribution of the model parameters for probabilistic PLS regression.
We empirically observed that as the sample size increases, the estimates exhibit consistency and asymptotic normality.

\begin{figure}[htbp]
    \centering \hspace*{-0cm}
    \includegraphics[viewport= 0 0 471.792 285.044, clip]{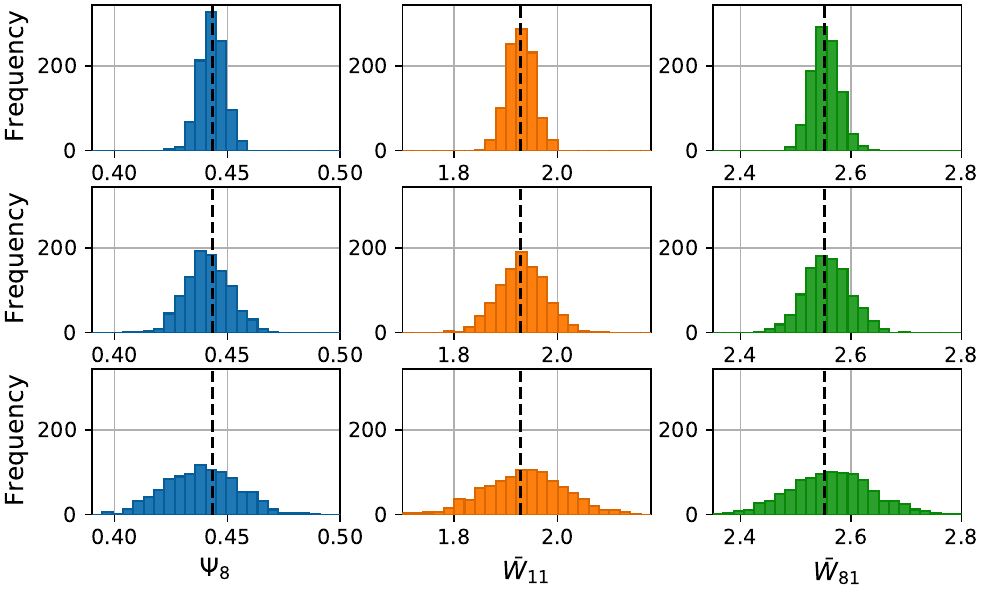}
    \caption{
        Sampling distribution of the maximum likelihood estimates for model parameters.
        The true parameters are indicated by the dashed lines.
        The sample sizes are 1000 for the lower row figures, 3000 for the middle row figures, and 9000 for the upper row figures.
    }
    \label{fig:sampling_distribution}
\end{figure}

\section{Conclusion} \label{sec:conclusion}
We addressed the issues related to improper solutions and identifiability in probabilistic PLS regression.
We found that for the model to be identifiable, it is necessary that the total number of latent dimensions be smaller than the number of observed variables, and that the number of shared latent dimensions does not exceed the number of objective variables.
In particular, when there is only one objective variable, the condition implies that the dimensionality of the shared latent space must be one.
This fact has not been pointed out in the context of classical PLS regression.
We numerically validated the conditions under which the model is identifiable through real data analysis.

Although the issues related to improper solutions and identifiability do not affect the predictive performance of the model itself and the regression coefficients can be estimated without explicitly considering underlying latent variables, extracting insights from these regression coefficients that could lead to an improvement in predictive performance is difficult.
For example, in the presence of multicollinearity in explanatory variables, it is impractical to vary one specific explanatory variable while keeping the other explanatory variables fixed.
In such a situation, interpreting the latent variables in the probabilistic PLS regression and considering to adjust them can provide insights into how to control the objective variables.
Thus, addressing these issues and giving an interpretation of the latent variables is essential for making the model practically useful.

For classical PLS regression, various algorithms have been developed and implemented in various software packages~\cite{sklearn2011, Mevik2007}.
Although the predictive performance of the proposed PLS regression is only comparable to that of the classical PLS regression, the proposed model has identifiable model parameters and the ability to naturally handle missing values in data.
Therefore, the proposed model can be regarded as a superior alternative to the classical PLS regression.
In fact, we analyzed the HIV protease mutation data with missing values and demonstrated that probabilistic PLS regression can effectively handle missing values.
The ability to handle missing values would be particularly useful when applying probabilistic PLS regression to genome-wide association studies, where missing values in single nucleotide polymorphism (SNP) are usually complemented by SNP genotype imputation.
An important direction for future work is to extend probabilistic PLS regression to cases where observed variables include binary and categorical variables.

\appendix

\section{Proof of the identifiability of factor analysis} \label{sec:proof_fa}
In this appendix, we provide a proposition concerning the identifiability of factor analysis and its proof.

\begin{proposition} \label{proposition:proposition_1}
    Let $\mathbf{x}$ be a $p$-dimensional column vector of observed variables.
    We consider factor analysis where the distribution of $\mathbf{x}$ is given as follows:
    \begin{align}
        p(\mathbf{x}) =& \mathcal{N}(\mathbf{x} \mid \bm{\mu}, \Sigma), \\
        \Sigma =& \Psi + W W^T, \\
        \hat{W} =& \mathrm{diag}(\Sigma)^{-1/2} W, \hspace{0.5cm} H^2 = \mathrm{diag}(\hat{W} \hat{W}^T),
    \end{align}
    where $\bm{\mu}$ and $\Sigma$ are parameters of mean and covariance matrix of the normal distribution, respectively, $\Psi$ is a diagonal matrix of unique variances whose elements are nonnegative, $W$ is a $p \times q$ factor loading matrix, and $q$ is a dimension of the latent space.
    If the dimension of the latent space satisfies $0 < q < p$, the row norms of the scaled factor loading matrix $\hat{W}$ have the same value for all features, $\mathrm{diag}(\hat{W} \hat{W}^T) = H^2 = h^2 I$, and their value satisfies $(0 \le h \le 1)$, the symmetric matrix $\hat{W}^T \hat{W}$ is diagonalized and its nonzero nondegenerate eigenvalues are sorted in descending order $\hat{W}^T \hat{W} = \mathrm{diag}(\hat{\bm{\omega}}^2) = \hat{\Omega}^2, \hspace{0.3cm} \hat{\omega}_i > \hat{\omega}_j, \hspace{0.3cm} (i < j)$, and the row sums of the scaled factor loading matrix $\hat{W}$ are nonnegative,
    \begin{align}
        \sum_{i=1}^p \hat{W}_{ij} \ge 0, \hspace{0.5cm} j \in \{1, 2, \dots, q\}, \label{eq:fa_reflection_fix}
    \end{align}
    then $\Sigma_1 = \Sigma_2$ means $\Psi_1 = \Psi_2$ and $W_1 = W_2$.
\end{proposition}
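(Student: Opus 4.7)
The plan is to strip away the parameters one at a time by passing to a standardized covariance. Since $\Sigma_1 = \Sigma_2 \equiv \Sigma$, the diagonal $D^2 \equiv \mathrm{diag}(\Sigma)$ is shared between the two models, so I would work throughout with $\hat{\Sigma} = D^{-1} \Sigma D^{-1}$, $\hat{W}_i = D^{-1} W_i$, and $\hat{\Psi}_i = D^{-1} \Psi_i D^{-1}$. These satisfy $\hat{\Sigma} = \hat{\Psi}_i + \hat{W}_i \hat{W}_i^T$ with $\mathrm{diag}(\hat{\Sigma}) = I$, and the hypothesis $\mathrm{diag}(\hat{W}_i \hat{W}_i^T) = h_i^2 I$ immediately yields $\hat{\Psi}_i = (1 - h_i^2) I$, reducing each unique-variance part to a scalar multiple of the identity.

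Subtracting the two decompositions then gives
\[
\hat{W}_1 \hat{W}_1^T - \hat{W}_2 \hat{W}_2^T = (h_1^2 - h_2^2) I,
\]
and forcing the common scalar $h_1 = h_2$ is the main obstacle. I plan to resolve it by a rank/positive-semidefiniteness argument: $\hat{W}_2 \hat{W}_2^T$ is positive semidefinite of rank at most $q$, so at least $p - q \ge 1$ of its eigenvalues vanish, and shifting by $(h_1^2 - h_2^2) I$ turns these into $h_1^2 - h_2^2$. Since the left-hand side equals $\hat{W}_1 \hat{W}_1^T$, which is itself positive semidefinite of rank at most $q$, these repeated eigenvalues must be zero; combined with $h_i \in [0,1]$ this gives $h_1 = h_2$, hence $\hat{\Psi}_1 = \hat{\Psi}_2$ and $\hat{W}_1 \hat{W}_1^T = \hat{W}_2 \hat{W}_2^T$.

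The remainder is a standard orthogonal-rotation calculation. Equality of the outer products supplies an orthogonal $q \times q$ matrix $R$ with $\hat{W}_1 = \hat{W}_2 R$. Plugging into $\hat{W}_i^T \hat{W}_i = \hat{\Omega}_i^2$ yields $\hat{\Omega}_1^2 = R^T \hat{\Omega}_2^2 R$; since both sides are diagonal with the same spectrum sorted in descending order, $\hat{\Omega}_1^2 = \hat{\Omega}_2^2$ and $R$ commutes with this diagonal matrix. The distinctness of the nonzero eigenvalues then forces $R = \mathrm{diag}(\delta_1,\ldots,\delta_q)$ with $\delta_j \in \{-1,+1\}$ (on any null-eigenvalue positions the corresponding columns of $\hat{W}_i$ vanish, so the rotation there is immaterial). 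Applying the row-sum nonnegativity constraint~(\ref{eq:fa_reflection_fix}) to both $\hat{W}_1$ and $\hat{W}_2 = \hat{W}_1 R^{-1}$ then selects $\delta_j = +1$ for each relevant column, so $\hat{W}_1 = \hat{W}_2$; rescaling by $D$ recovers $W_1 = W_2$ and $\Psi_1 = \Sigma - W_1 W_1^T = \Psi_2$.
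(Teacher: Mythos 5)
Your proposal is correct and follows essentially the same route as the paper: standardize to the correlation scale so that the unique variance collapses to $(1-h^2)I$, pin down $h$ via the fact that $q<p$ forces $\hat{W}\hat{W}^T$ to have a zero eigenvalue (your rank/PSD argument is the paper's ``smallest eigenvalue of $\tilde{\Sigma}$ equals $1-h^2$'' lemma in slightly different clothing), and then resolve the remaining orthogonal freedom using the diagonalization of $\hat{W}^T\hat{W}$, the descending nondegenerate eigenvalue ordering, and the row-sum sign convention. Your explicit $\hat{W}_1=\hat{W}_2R$ factorization is just a more spelled-out version of the paper's appeal to uniqueness of the truncated spectral decomposition, so there is nothing substantively different to flag.
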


As a first step, we restate the proposition~\ref{proposition:proposition_1} in terms of the scaled factor loading matrix $\hat{W}$.
The covariance matrix $\Sigma$ can be reexpressed by the correlation matrix $\tilde{\Sigma}$ as follows:
\begin{align}
    \Sigma =& \Psi + \mathrm{diag}(\Sigma)^{1/2} \hat{W} \hat{W}^T \mathrm{diag}(\Sigma)^{1/2}, \notag \\
    =& 
    \mathrm{diag}(\Sigma)(I - H^2) + \mathrm{diag}(\Sigma)^{1/2} \hat{W} \hat{W}^T \mathrm{diag}(\Sigma)^{1/2}, \\
    \tilde{\Sigma} \equiv & \mathrm{diag}(\Sigma)^{-1/2} \Sigma \mathrm{diag}(\Sigma)^{-1/2} = I - H^2 + \hat{W} \hat{W}^T. \label{eq:fa_correlation_matrix}
\end{align}
Since the covariance matrix does not depend on the choice of parametrization, the identifiability for $\Sigma$ in the proposition~\ref{proposition:proposition_1} is equivalent to the identifiability for $\tilde{\Sigma}$ as follows; if the correlation matrix,
\begin{align}
    \tilde{\Sigma} = (1 - h^2) I + \hat{W} \hat{W}^T, \label{eq:identifiability_2}
\end{align}
can be expressed in two different ways as $\tilde{\Sigma}_2 = \tilde{\Sigma}_2$, then $h_1 = h_2$ and $\hat{W}_1 = \hat{W}_2$.

We first show, by the following lemma, that the parameter $h$ in Eq.~(\ref{eq:identifiability_2}) does not depend on the choice of parametrization.
\begin{lemma} \label{lemma:lemma_1}
    Let the correlation matrix $\tilde{\Sigma}$ be parametrized as in Eq.~(\ref{eq:identifiability_2}).
    Then for the number of latent dimensions $q < p$, $\tilde{\Sigma}_1 = \tilde{\Sigma}_2$ means $h_1 = h_2$.
\end{lemma}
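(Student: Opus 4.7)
The plan is to exploit the rank deficiency condition $q<p$ together with the fact that $\tilde{\Sigma}$ in the representation~(\ref{eq:identifiability_2}) has all diagonal entries equal to one. The whole argument reduces to a short rank/spectrum comparison.

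First I would note that the parametrization is consistent with $\tilde{\Sigma}$ being a genuine correlation matrix: since $\mathrm{diag}(\hat{W}\hat{W}^T)=h^2 I$, the diagonal of $(1-h^2)I+\hat{W}\hat{W}^T$ equals $(1-h^2)+h^2=1$, so the constraint $\mathrm{diag}(\tilde{\Sigma})=I$ is automatic and does not by itself pin down $h$. Next, assuming two representations $\tilde{\Sigma}_1=\tilde{\Sigma}_2$, I would subtract to obtain the identity
\begin{equation}
    \hat{W}_2\hat{W}_2^T-\hat{W}_1\hat{W}_1^T=(h_2^2-h_1^2)\,I.
\end{equation}
Rearranging, one of the Gram matrices $\hat{W}_i\hat{W}_i^T$ is a shift of the other by a scalar multiple of $I$.

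The key step is then a spectral contradiction. Suppose for contradiction that $h_1\neq h_2$; without loss of generality $h_2^2>h_1^2$, so the shift $c\equiv h_2^2-h_1^2$ is strictly positive. Each $\hat{W}_i\hat{W}_i^T$ is a $p\times p$ positive semidefinite matrix of rank at most $q$, and by the hypothesis $q<p$ each has at least $p-q\ge 1$ zero eigenvalues. In particular the smallest eigenvalue of $\hat{W}_2\hat{W}_2^T$ equals zero. On the other hand, by Weyl's inequality (or direct inspection of the spectrum of a PSD matrix plus a scalar multiple of the identity),
\begin{equation}
    \lambda_{\min}\!\left(\hat{W}_1\hat{W}_1^T+cI\right)=\lambda_{\min}(\hat{W}_1\hat{W}_1^T)+c=0+c=c>0,
\end{equation}
which contradicts $\lambda_{\min}(\hat{W}_2\hat{W}_2^T)=0$. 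Hence $h_1=h_2$.

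There is no genuine obstacle here; the only point that requires care is making sure we invoke $q<p$, since without that assumption both matrices could have full rank and the scalar shift would not be ruled out. I would also remark briefly that this lemma only identifies $h$ and not yet $\hat{W}$, so once $h_1=h_2$ is established the remaining identification of $\hat{W}$ up to the diagonalization/sign conventions in Proposition~\ref{proposition:proposition_1} is deferred to a subsequent lemma.
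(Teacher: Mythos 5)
Your argument is correct and rests on the same key fact as the paper's own proof: since $\hat{W}\hat{W}^T$ is a $p\times p$ positive semidefinite matrix of rank at most $q<p$, its smallest eigenvalue is zero, so the smallest eigenvalue of $\tilde{\Sigma}$ equals $1-h^2$ and is parametrization-independent. The paper reads this off directly from the spectral decomposition $\tilde{\Sigma}=U[(1-h^2)I+\hat{\Omega}^2]U^T$ rather than via your subtraction-and-contradiction packaging, but the two are essentially the same argument.
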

\begin{proof}[\textbf{\upshape Proof:}]
    Since the matrix $\hat{W} \hat{W}^T$ is symmetric and positive semidefinite, it has a spectral decomposition of the form $\hat{W} \hat{W}^T = U \hat{\Omega}^2 U^T$, where a $p \times p$ matrix $U$ is the eigenvector matrix, constructed by placing the eigenvectors of $\hat{W} \hat{W}^T$ as columns, and satisfies $U^T = U^{-1}$, and $\hat{\Omega}^2$ is a diagonal matrix whose diagonal elements are eigenvalues of $\hat{W} \hat{W}^T$.
    By using the decomposition of the identity matrix $I = U U^T$, the correlation matrix of observed variable $\mathbf{x}$ also has the following spectral decomposition: 
    \begin{align}
        \tilde{\Sigma} = U [ (1 - h^2) I + \hat{\Omega}^2 ] U^T.
    \end{align}
    The above equation implies that the eigenvector matrix of $\hat{W} \hat{W}^T$ is also the eigenvector matrix of $\tilde{\Sigma}$.
    Since the matrix $\hat{W} \hat{W}^T$ is rank-deficient, the eigenvalue diagonal matrix $\hat{\Omega}^2$ contains zero eigenvalues.
    Then, the smallest eigenvalue of the correlation matrix $\tilde{\Sigma}$ is $1 - h^2$.
    The eigenvalues of the correlation matrix do not depend on the choice of parametrization.
    Thus, the parameter $h$ is uniquely determined regardless of the parametrization.
\end{proof}

Therefore, if $\hat{W}_1 \hat{W}_1^T = \hat{W}_2 \hat{W}_2^T$ implies $\hat{W}_1 = \hat{W}_2$, the factor analysis model is identifiable.
The identifiability for $\hat{W}$ is demonstrated by the following lemma.
\begin{lemma} \label{lemma:lemma_2}
    For $p \times q$ matrix $\hat{W}$, if $q \le p$, the symmetric matrix $\hat{W}^T \hat{W}$ is diagonalized and its nonzero nondegenerate eigenvalues are sorted in descending order, and the row sums of the matrix $\hat{W}$ are nonnegative, then $\hat{W}_1 \hat{W}_1^T = \hat{W}_2 \hat{W}_2^T$ means $\hat{W}_1 = \hat{W}_2$.
\end{lemma}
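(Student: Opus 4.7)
The plan is to combine the singular value decomposition (SVD) of $\hat{W}_i$ with the uniqueness of spectral decompositions for symmetric matrices having distinct eigenvalues. For each $i \in \{1,2\}$, I would write the thin SVD $\hat{W}_i = U_i \Sigma_i V_i^T$ with $U_i$ having orthonormal columns, $\Sigma_i$ diagonal with singular values in nonincreasing order, and $V_i$ orthogonal. Then $\hat{W}_i \hat{W}_i^T = U_i \Sigma_i^2 U_i^T$ and $\hat{W}_i^T \hat{W}_i = V_i \Sigma_i^2 V_i^T$, and the latter is assumed equal to a diagonal matrix $\hat{\Omega}_i^2$ whose nonzero entries are distinct and arranged in descending order.

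The first step is to reduce to the full-column-rank case. A zero diagonal entry of $\hat{W}^T \hat{W}$ equals the squared norm of the corresponding column of $\hat{W}$, which forces that column to vanish; since $\hat{W}_1 \hat{W}_1^T = \hat{W}_2 \hat{W}_2^T$ implies equal rank, the same columns vanish in both $\hat{W}_1$ and $\hat{W}_2$, and agree trivially. On the nonzero block, $V_i \Sigma_i^2 V_i^T = \hat{\Omega}_i^2$ forces $V_i$ to commute with the diagonal $\hat{\Omega}_i^2$; distinctness of its entries then implies $V_i$ is itself diagonal, and orthogonality reduces it to a diagonal sign matrix $S_i$.

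The second step uses $\hat{W}_1 \hat{W}_1^T = \hat{W}_2 \hat{W}_2^T$ directly. Uniqueness of singular values gives $\Sigma_1 = \Sigma_2 =: \Sigma$, and the distinctness of the eigenvalues of $U_i \Sigma^2 U_i^T$ produces $U_2 = U_1 R$ for some orthogonal $R$ satisfying $R \Sigma^2 R^T = \Sigma^2$; by the same argument $R$ is a diagonal sign matrix. Combining the two SVDs and commuting the diagonal matrices past one another gives $\hat{W}_2 = U_1 \Sigma R S_2 = \hat{W}_1 (S_1 R S_2) =: \hat{W}_1 S$ with $S$ a diagonal sign matrix (using $S_1^{-1} = S_1$). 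The final step invokes the column-sum (``row-sum'') condition $\sum_i [\hat{W}]_{ij} \ge 0$: from $[\hat{W}_2]_{ij} = S_{jj} [\hat{W}_1]_{ij}$ we obtain $S_{jj} \sum_i [\hat{W}_1]_{ij} \ge 0$, and combined with $\sum_i [\hat{W}_1]_{ij} \ge 0$ this forces $S_{jj} = 1$ whenever that column sum is strictly positive. Hence $\hat{W}_1 = \hat{W}_2$.

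The main obstacle is the SVD uniqueness reasoning, specifically showing that both $V_i$ and $R$ reduce to diagonal sign matrices. This rests crucially on the nondegeneracy hypothesis: distinct nonzero eigenvalues make each eigenvector unique up to sign, leaving exactly the residual sign ambiguity in the columns of $\hat{W}$ that the column-sum condition is designed to eliminate. Additional care is also needed to treat the zero-eigenvalue block (handled by the column-norm observation above) and the nongeneric case of a vanishing column sum, where the sign is not determined by the hypothesis; such edge cases do not affect the main argument but need to be acknowledged.
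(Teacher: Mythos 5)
Your proof is correct and follows essentially the same route as the paper's: the diagonalization hypothesis makes the normalized columns of $\hat{W}$ an orthonormal eigenbasis for the nonzero eigenvalues of $\hat{W}\hat{W}^T$ (your thin SVD with $V$ reduced to a sign matrix is exactly the paper's factorization $\hat{W} = (\hat{W}\hat{\Omega}^{-1})\hat{\Omega}$), and uniqueness of the truncated spectral decomposition under nondegeneracy plus the sign condition does the rest. You are in fact more explicit than the paper about the zero-column block and about the edge case of a vanishing column sum, where the nonnegativity condition genuinely fails to determine the sign --- a caveat the paper's own proof glosses over.
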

\begin{proof}[\textbf{\upshape Proof:}]
    The diagonalization condition of $\hat{W}^T \hat{W}$ means that each column of $\hat{W}$ is orthogonal.
    We separate the norms of the column vectors of $\hat{W}$ and express it as $\hat{W} = (\hat{W} \hat{\Omega}^{-1}) \hat{\Omega}$, where $\hat{\Omega}$ is a $q \times q$ diagonal matrix with positive diagonal elements, and each column of $\hat{W} \hat{\Omega}^{-1}$ is orthonormalized.
    Then, the lemma~\ref{lemma:lemma_2} is equivalent to the statement: if
    \begin{align}
        (\hat{W}_1 \hat{\Omega}_1^{-1}) \hat{\Omega}_1^2 (\hat{\Omega}_1^{-1} \hat{W}_1^T) = (\hat{W}_2 \hat{\Omega}_2^{-1}) \hat{\Omega}_2^2 (\hat{\Omega}_2^{-1} \hat{W}_2^T),
    \end{align}
    then $\hat{W}_1 = \hat{W}_2$ and $\hat{\Omega}_1 = \hat{\Omega}_2$.
    The above equation is a truncated spectral decomposition of the symmetric positive semidefinite matrix, and the spectral decomposition is unique up to the ordering of eigenvalues and the sign of eigenvectors.
    Therefore, if the nonzero eigenvalues of $\hat{W} \hat{W}^T$ are nondegenerate and sorted in descending order, and the sign of the column of $\hat{W}$ satisfies Eq.~(\ref{eq:fa_reflection_fix}), then the decomposition is unique.
    Therefore, $\hat{W}_1 = \hat{W}_2$.
\end{proof}
This completes the proof of the proposition~\ref{proposition:proposition_1}.

\section{Proof of the identifiability of PLS regression} \label{sec:proof_pls}
In this appendix, we provide a proof of the theorem~\ref{theorem:pls_identifiability}.
As in the case of the proof of the proposition~\ref{proposition:proposition_1}, we express the covariance matrix of the observed variables for the probabilistic PLS regression by the correlation matrix as
\begin{align}
    \tilde{\Sigma} \equiv& \mathrm{diag}(\Sigma)^{-1/2} \Sigma \mathrm{diag}(\Sigma)^{-1/2} =
    \begin{bmatrix} \tilde{\Sigma}_{xx} & \tilde{\Sigma}_{xy} \\ \tilde{\Sigma}_{yx} & \tilde{\Sigma}_{yy} \end{bmatrix}, \notag \\
    =&
    (1 - h^2) I + 
    \begin{bmatrix}
        \hat{W}_{xu} \hat{W}_{ux} + \hat{W}_{xv} \hat{W}_{vx} \; \; \; & \hat{W}_{xu} \hat{W}_{uy} \\
        \hat{W}_{yu} \hat{W}_{ux} & \hat{W}_{yu} \hat{W}_{uy}
    \end{bmatrix}
    =
    (1-h^2) I + \hat{W} \hat{W}^T. \label{eq:pls_correlation_matrix}
\end{align}
Since the covariance matrix does not depend on the choice of the parametrization, identifiability for $\Psi$ and $W$ is equivalent to the identifiability for $h$ and $\hat{W}$.

First, according to the lemma~\ref{lemma:lemma_1}, if $p_u + p_v < p_x + p_y$, the parameter $h$ does not depend on the choice of the parametrization, since it corresponds to the smallest eigenvalue of the matrix $\tilde{\Sigma}$.
Next, we focus on the submatrix $\tilde{\Sigma}_{yy}$.
From the lemma~\ref{lemma:lemma_2}, if $p_u \le p_y$, the matrix $\hat{W}_{uy} \hat{W}_{yu}$ is diagonalized and its nonzero nondegenerate eigenvalues are sorted in descending order, and the sign of the column of $\hat{W}_{yu}$ satisfies
\begin{align}
    \sum_{i=1}^{p_y} [\hat{W}_{yu}]_{ij} \ge 0, \hspace{0.5cm} j \in \{1, 2, \dots, p_u \},
\end{align}
then, $\hat{W}_{yu}$ is uniquely determined.
We now turn to the submatrix $\tilde{\Sigma}_{xy}$.
Multiplying $\hat{W}_{yu} (\hat{W}_{uy} \hat{W}_{yu})^{-1}$ on the right of $\tilde{\Sigma}_{xy}$, it follows that $\hat{W}_{xu}$ is also identifiable.
Finally, we focus on the submatrix $\tilde{\Sigma}_{xx}$.
This reduces to the identifiability of the expression $\hat{W}_{xv} \hat{W}_{vx}$, which can be proved in the same way as the identifiability of the expression $\hat{W}_{yu} \hat{W}_{uy}$.
From the lemma~\ref{lemma:lemma_2}, if $p_v \le p_x$, the matrix $\hat{W}_{vx} \hat{W}_{xv}$ is diagonalized and its nonzero nondegenerate eigenvalues are sorted in descending order, and the sign of the column of $\hat{W}_{xv}$ satisfies
\begin{align}
    \sum_{i=1}^{p_x} [\hat{W}_{xv}]_{ij} > 0, \hspace{0.5cm} j \in \{1, 2, \dots, p_v \},
\end{align}
then, $\hat{W}_{xv}$ is uniquely determined.
This completes the proof of the theorem~\ref{theorem:pls_identifiability}.

\section{Expression with missing values} \label{sec:expression_with_nan}
In this appendix, we provide the expressions of the proposed model in the presence of missing values for the programming implementation of the proposed model.

For this purpose, we first define the index label.
We represent the indices of the matrices $\Sigma$ and $W$ using uppercase labels as follows,
\begin{align}
    \Sigma =
    \begin{bmatrix}
        \Sigma_{XX} & \Sigma_{XY} & \Sigma_{XZ} \\
        \Sigma_{YX} & \Sigma_{YY} & \Sigma_{YZ} \\
        \Sigma_{ZX} & \Sigma_{ZY} & \Sigma_{ZZ} \\
    \end{bmatrix}, \hspace{1cm}
    W = 
    \begin{bmatrix}
        W_{XU} & W_{XV} \\
        W_{YU} & O
    \end{bmatrix}.
\end{align}
Furthermore, we write $X_o$ and $Y_o$ as the index labels of the explanatory and objective variables, respectively, for which observed values are obtained.
Then, we express the index label of the observed variables as $O=(X_o, Y_o)$.
The subvectors corresponding to the parts of the explanatory variables $\mathbf{x}$ and the objective variables $\mathbf{y}$ for which observed values are obtained are expressed using the same index label $O$ as follows,
\begin{align}
    \mathbf{x}_O \equiv \begin{bmatrix} \mathbf{x} \\ \mathbf{y} \\ \mathbf{z} \end{bmatrix}_{X_o}, \hspace{1cm}
    \mathbf{y}_O \equiv \begin{bmatrix} \mathbf{x} \\ \mathbf{y} \\ \mathbf{z} \end{bmatrix}_{Y_o}.
\end{align}

Then, the joint distribution of the observed variables can be expressed as follows,
\begin{align}
    p(\mathbf{x}_{O}, \mathbf{y}_{O}) =& \mathcal{N}([\mathbf{x}_{O}; \mathbf{y}_{O}] \mid [\bm{\mu}_{X_o}; \bm{\mu}_{Y_o}], \Sigma_{OO}).
\end{align}

Next, the predictive distribution given the explanatory variables is as follows:
\begin{align}
    p(\mathbf{y} \mid \mathbf{x}_{O}) =&
    \mathcal{N}(\mathbf{y} \mid \bm{\mu}_{Y | X_o}, \Sigma_{Y | X_o}), \\
    \bm{\mu}_{Y | X_o} =& \bm{\mu}_{Y} + \Sigma_{Y X_o} \Sigma_{X_o X_o}^{-1} (\mathbf{x}_{O} - \bm{\mu}_{X_o}).
\end{align}

Lastly, the conditional distribution of factor scores given the observed variables is as follows:
\begin{align}
    p(\mathbf{z} \mid \mathbf{x}_{O}, \mathbf{y}_{O}) =&
    \mathcal{N}(\mathbf{z} \mid \bm{\mu}_{Z | O}, \Sigma_{Z | O}), \\
    \bm{\mu}_{Z | O} =& \bm{\mu}_Z + \Sigma_{Z O} \Sigma_{OO}^{-1}
    \begin{bmatrix} \mathbf{x}_{O} - \bm{\mu}_{X_o} \\ \mathbf{y}_{O} - \bm{\mu}_{Y_o} \end{bmatrix}.
\end{align}

\bibliographystyle{unsrt}
\bibliography{pls_csda_arxiv}

\end{document}